\begin{document}
\title{Efficient Multilinear Map from Graded Encoding Scheme}
\author{Majid Salimi}
\institute{Faculty of Computer Engineering, University of Isfahan, Isfahan, Iran\\
\email{M.Salimi@eng.ui.ac.ir,MajidSalimi3@gmail.com}}

\maketitle
\begin{abstract}
	Though the multilinear maps have many cryptographic applications, secure and efficient construction of such maps is an open problem. Many multilinear maps like GGH, GGH15, CLT, and CLT15 have been and are being proposed, while none of them is both secure and efficient. The construction of some multilinear maps is based on the Graded Encoding Scheme (GES), where, the necessity of announcing zero-testing parameter and encoding of zero has destroyed the security of the multilinear map.
	Attempt is made to propose a new GES, where, instead of encoding an element, the users can obtain the encoding of an associated but unknown random element. In this new setting, there is no need to publish the encodings of zero and one. This new GES provides the actual functionality of the usual GES and can be applied in constructing a secure and efficient multilinear map and a multi-party non-interactive key exchange (MP-NIKE) scheme. We also improve the MP-NIKE scheme of \cite{Access20} and turn it into an ID-based MP-NIKE scheme.
	
\textbf {Keywords:} Multilinear map, Self-multilinear map, multi-party key exchange.
\end{abstract}

\section{Introduction} 
Multilinear map is the extension of bilinear map, where, instead of two elements, it is given many elements as the input. This map is widely applied in many cryptographic applications, like broadcast encryption, attribute-based encryption, indistinguishable obfuscation, and multi-party key exchange. \\
Boneh and Silverberg (2003) introduced the notion of multilinear map and showed that a multilinear map can be applied in constructing efficient broadcast encryption and non-interactive key-exchange, but they did not introduce any concrete construction for multilinear maps therein. They found that the well-known bilinear pairings cannot be generalized in constructing multilinear maps. 
 Gentry et al. (2013) found that a multilinear map can be applied in constructing an Indistinguishable Obfuscation $i\mathcal{O}$ \cite{GGH13b}; moreover, it can be applied in constructing other tools like Pseudo-Random Function (PRF) and Constrained PRF \cite{5Gen}, Attribute-Based Encryption (ABE) \cite{GGHZ} and Functional Encryption (FE).

\subsection{Related Work}
In 2003, Garg et al. proposed the first candidate of the multilinear map (GGH13) based on the Graded Encoding Scheme (GES) in ideal lattices. Coron et al. (2013) proposed another candidate of the multilinear map (CLT13) based on GES and over integers \cite{CLT13}. Both the CLT13 and GGH13 maps are inefficient \cite{Albrecht14}. Langlois et al. (2014) proposed a more efficient version of GGH, named GGHLite \cite{GGHLite}, and then, Albrecht et al. (2014) implemented the GGHLite and made its codes publicly available under an open-source license \cite{Albrecht14}.
Their construction is inefficient in the sense that, computing the value of a 7-linear map takes about 1.75 seconds on a 16-core CPU \cite{Albrecht14}. Hu and Jia (2015) revealed that the GGH13 and GGHLite multilinear maps are not secure \cite{HuJia}. Following this, Cheon et al. (2015) proposed an attack on the CLT13 multilinear map \cite{Cheon15}. In the same year, Gentry, Gorbunov and Halevi proposed a graph-based multilinear map named GGH15 \cite{GGH15}, and immediately, Coron et al. (2014) proposed an attack against GGH15 \cite{CLLT15}. Coron et al. (2015) improved the CLT13 scheme \cite{CLT15}, while, immediately, Minaud and Fouque proposed an attack on their scheme \cite{Minaud15}. All these schemes lack formal security proof. 
Lewi et al. (2016) constructed a framework for CLT multilinear map named 5Gen \cite{5Gen}, which was the most efficient multilinear map \cite{5Gen}, \cite{MZ17} of the date. Ma and Zhandry (2018) proposed a CLT-like scheme and proved that their scheme is secure against known attacks, but they did not prove the security in a formal model \cite{MZ17}. \\
In 2016, Albrecht et al. (2016) proposed a multilinear map from obfuscation  \cite{AlbrechtTCC}, \cite{AlbrechtJoC}, which is not efficient because of employing costly tools like indistinguishable obfuscation and homomorphic encryption.
 \subsection{Our Contribution}
A new GES scheme is presented in this article. In previous graded encoding schemes each element has many encodings,  while in this scheme any element has a unique encoding. In the previous schemes, any code-word could be easily decoded without re-randomization, while, this encoding scheme, even without re-randomization is irreversible. Previous schemes need encodings of zero for re-randomization, while, in this scheme, re-randomization and encodings of zero as its public parameters are not required. Next to these, they are restricted to a pre-defined level $\kappa$ to extract a representation of an element, while the proposed GES scheme can extract the representation of the element in any level of encoding. In this proposed GES, users can directly check the equality of two encoded values, thus, no need for zero-testing parameter. 
In this proposed GES, users can add and multiply the encoded elements. Multiplication of encoded elements will increase the level of encoding similar to that of the available GES schemes. The advantage of this newly proposed GES vs. its counterparts is that users can extract a representation of encoded elements at any arbitrary level of encoding, without being limited to a predefined level $\kappa$. Contrary to the available GES candidates, this proposed GES scheme is practical and efficient in different aspects like the system parameters’ size, encoded elements’ size and computation cost.
After introducing this GES, we will explain how it can be applied in multilinear map and designing MP-NIKE scheme, where the key generation algorithm is not run by a Trusted Third Party (TTP). 

	\subsection{Paper Organization}
The rest of this article is organized as follows: In Section \ref{Preliminaries} the notation, required definitions and the proposed GES general model are introduced. Our proposed GES scheme is described in Section 3. In Section \ref{MMapConstruction}, an efficient multilinear map based on the proposed GES is constructed. An MP-NIKE scheme without key generation algorithm run by TTP is proposed in Section \ref{MP-NIKE}. In Section \ref{ID-MP-NIKE}, the MP-NIKE scheme of \cite{Access20} is converted to an ID-Based scheme. The security of the proposed multilinear map is proved in Section \ref{SecurityProof}, and finally, the paper is concluded in Section 2.
	\section{Preliminaries}
\label{Preliminaries}
The required mathematical background and the GES general model are introduced here.
\subsection {Notations}
The notations applied throughout this article are listed in Table 1.	
\begin{table}
	\label{Notation}
	\caption{  Notations}
	\begin{center}
		\begin{tabular}{|p{40pt}|p{180pt}|}
			\hline
		 \textbf{Symbol} & \textbf{Description}    \\
			\hline	
			$z(xq)$ & A random polynomial of $xq$, such that $z(0)=0$\\
			\hline	
			$r^{(j)}$ & $j$-th bit of the integer $r$\\
			\hline	
			$ p$, $x$ and $q$ & Three large primes\\
			\hline
			$N$ & Modulus of computation\\
			\hline
			$\mathds{G}$ & A multiplicative subgroup of $\mathds{Z}^*_N$\\
			\hline
			$g$ & A generator of $\mathds{G}$\\	
			\hline	
			$r$ &Random integer\\
			\hline
			$W$ & A set of encoded base elements which is used for encoding other elements. \\
			\hline
			$u_i$ & Encoding of $r_i$\\
			\hline
			$b_i$ & Blinded encoding of $r_i$\\
			\hline
			$[x]_j$& The element $x$ is encoded at level $j$ \\
			\hline
			$SP$ & System parameters\\
			\hline
			$msk$ & The KGC's master secret key\\
			\hline
			$e_i$ & Encoding of $y_i$\\
			\hline
			$d_i$ & Blinded encoding of $y_i$\\
			\hline
			$\gamma$ & Security parameter\\
			\hline
			$A$ & The adversary\\
			\hline
			$B$ & The challenger algorithm\\
			\hline
		\end{tabular} \\\setlength\tabcolsep{5pt}		
	\end{center}
\end{table}

\subsection {Graded Encoding Scheme}
The recent constructions of multilinear maps (e.g. the GGHLite multilinear map) use an encoding scheme named Graded Encoding Scheme (GES) \cite{GGHLite} based on ideal lattices. In the available GES, a same element has many different encodings. In this section, the GGHLite graded encoding scheme \cite{GGHLite} is briefly described. The GES encoding scheme involves the following seven algorithms.

\textbf{Instance generation InstGen($1^\lambda$, $1^\kappa$)}: Consider polynomial rings $R=\mathds{Z}[x]/\langle x^n+1\rangle$ and $R_q=R/qR$. The algorithm InstGen is given a security parameter $\lambda$ and the multilinearity parameter $\kappa$ as the input, and generates system parameters $SP$ and one instance of level-1 encoding of $1$ ($[o]_1$) and two instances of level-1 encodings of zero $([z_1]_1, [z_2]_1)$. The $[z_1]_1$ and $[z_2]_1$ will be considered as randomizers.

\textbf{Level-0 sampler samp($SP$)}: In this algorithm, any user chooses an integer $y$ which in GES is defined as a level-0 encoding. In lattice-based GES schemes, the level-0 encoding $y$ is an integer sampled from discrete Gaussian distribution. 

\textbf{Level-1 encoding enc($SP$, $y$)}: Given a level-0 encoding $y$ and $SP$, this algorithm computes $[y]_1=y[o]_1$ and yields the encoded value of $y$ at level $1$, while, by having the publicly known $[o]_1$ computing $[y]_1$ becomes reversible and the value $y$ can be computed. In this case, the user must randomize $[y]_1$ as follows: 
\begin{equation}
[y\acute{]}_1=[{y}]_1+r_1[z_{1}]_1+r_2[z_{2}]_1,
\end{equation}
where $r_1$ and $r_2$ are two random integers (level-0 encodings). Note that the resultant $[y\acute{]}_1$ is still an encoding of $y$ at level 1. Adding different multiples of $[z_1]_1$ and $[z_2]_1$, as their name suggests, is to randomize the encoding, and does not affect the actual value and level of the the encoding. The $\acute{[y]_1}$ is a random encoding of integer $y$ and computing the value of $y$ from $[y\acute{]}_1$ is intractable. In the GES encoding scheme $[y]_1$, $[y\acute{]}_1$ and $[o]_1$ are level-1 encodings.\\

\textbf{Adding encodings add($[a]_l , [b]_l)$}:
For adding the two elements $[a]_l$ and $[b]_l$ together they must be at the same level $l$ and the result is $[a+b]_l$ at level $l$.\\ 

\textbf{Multiplying encodings mult($[a]_{l_1} , [b]_{l_2})$}:
Multiplication of two elements $[a]_{l_1}$ and $[b]_{l_2}$ with any arbitrary level $l_1$ and $l_2$ is possible and the result will be $[ab]_{l_{1}+l_{2}}$ at level $l_1+l_2$.\\

\textbf{isZero($SP$, $p_{zt}$, $[y]_\kappa$)}:
The isZero algorithm is applied in testing whether an encoded element $[y]_\kappa$ (at a pre-defined level $\kappa$) is a valid encoding of zero or not, applicable in equality test of two encoded elements (i.e. if $[y_1]_\kappa-[y_2]_\kappa$ is valid encoding of zero then $y_1=y_2$). The $p_{zt}$ is the zero testing parameter and if $[y]_\kappa.p_{zt}$ is less than specific value ($q^{3/4}$), then $[y]_\kappa$ is a valid encoding of zero, (i.e. $y = 0$).\\

\textbf{ext($SP$, $pzt$, $[y]_\kappa$)}:
For extracting the random representation of an encoded element, it must be at a predefined level $\kappa$, where users can extract it by multiplying it with the \textit{zero testing parameter}. The result is a unique image of $y$ not its actual value. This image of $y$ only depends on its exact value and does not depend on the randomizers and their coefficients. For exact details, refer to \cite{GGHLite}.
\subsection{Definitions }

To describe and prove the security of the proposed scheme, the following definitions should be defined:

\begin{definition} (Subset Sum Problem (SSP)).
\label{DSSP}
Let $W=\{e_1, e_2, \dots,e_m\}$ be a set of $m$ random integers. The SSP is the problem of finding a subset of $W$, where the sum of its elements is equal to a given value $u$. 
\end{definition}

The SSP is an NP-complete problem and the complexity of its best  possible solutions is equal to $O(2^{m/2})$ \cite{KCSSP}, $O(u\sqrt{m})$ \cite{JVSSP} and $O(mc)$ \cite{P99}, where $c$ is the size of the largest element of $W$. 
\begin{definition} \label{MMap}(Multilinear Map). Let $\mathds{G}_1$ and $\mathds{G}_2$ be two cyclic groups of the same order $N$. Consider $g_1, g_2,\dots$ and $g_t$ are generators of $\mathds{G}_1$, and $g_{t+1}$ is a generator of $\mathds{G}_2$. The map $e:\underbrace{\mathds{G}_1\times\mathds{G}_1\times\dots\times\mathds{G}_1}_\text{t times} \rightarrow \mathds{G}_2$ is named a multilinear map if it satisfies the following properties \cite{BS03}.
\begin{enumerate}
	\item For any integer $\beta \in \mathds{Z}$ and generators $g_{i} \in \mathds{G}_1$, $i=  1,\dots,t$, and $g_{t+1} \in \mathds{G}_2$, we have $e(g_{1}^{\alpha}, \dots, g_j,\dots, g_{t}^{\gamma})^{\beta}$ $=$ $e( g_{1}^{\alpha}, \dots, g_{i}^{\beta},\dots, g_{t}^{\gamma})$.
	\item The map $e$ is non-degenerate, i.e. if $g_i \in \mathds{G}_1$, for $i=1,\dots,t $, are all generators of $\mathds{G}_1$, then $g_{t+1}$=$e(g_1,\dots,g_t)$ is a generator of $\mathds{G}_2$.
\end{enumerate}
\end{definition}
\begin{definition} \label{Iso}(Injective function $f$).
	  A function $f:\mathds{S} \rightarrow \mathds{G}$ is injective if it maps each element of $\mathds{S}$ into exactly one distinct element of $\mathds{G}$, where $\mathds{S}$ is a set and $\mathds{G}$ is a group. The size of $\mathds{G}$ must be equal or greater than $\mathds{S}$.
\end{definition}
\begin{definition} \label{Iso}(Isomorph ring).
	The ($\mathds{R}_1, +, \times$) $\cong $ ($\mathds{R}_2, +, \times$) are isomorph rings if there exists a reversible bijective function $f:\mathds{R}_1 \rightarrow \mathds{R}_2$ in a sense that it maps each element of $\mathds{R}_1$ to exactly one element of $\mathds{R}_2$ and each element of $\mathds{R}_2$ is paired to exactly one element of $\mathds{R}_1$. For any elements $a$ and $b$ $\in \mathds{R}_1$ the following is yield.
	\begin{align}
		f(a)+f(b)=f(a+b)\\
		f(a)\times f(b)=f(a\times b)
	\end{align}
\end{definition}

\begin{definition} \label{QRing}(Quotient ring). Let $\mathcal{I}$ be an ideal in $\mathds{Z}$ we define an equivalence relation $\sim$ on $\mathds{Z}$ as below\\
\begin{align}
&a+\mathcal{I}=\{a+r:r \in \mathcal{I}\}\\
&a\sim b  \iff a-b \in \mathcal{I}
\end{align}
We use $\mathds{Z}/\mathcal{I}$ to show the set of all such equivalence classes. The $\mathds{Z}/\mathcal{I}$ is a ring and we have
\begin{align}
(a+I)+(b+I)&=(a+b)+I\\
 (a+I)(b+I)&=(ab)+I
\end{align}
 The multiplicative identity of $\mathds{R}=\mathds{Z}/\mathcal{I}$ is $\bar {1}=(1+I)$ and its zero-element is $\bar {0}=(0+I)=I$.
\end{definition}

\section {The Proposed Graded Encoding Scheme}
Before describing our Graded Encoding Scheme, first, the Yamakawa self-bilinear map should be discussed. The bilinear map is a multilinear map that has two domain inputs ($t=2$ in Definition 4) and a different target group, where its target group and domain groups are different. Yamakawa et al. proposed a new bilinear map, named the self-bilinear map, where the two domain groups and the target group are the same with unknown order \cite {Yamakawa}. If the order of the self-bilinear map group is known (e.g. a safe prime $2p+1$, where $p$ is also prime), then the CDH problem will be easy in this group, because, if $e(g,g)=g_1=g^c \bmod 2p+1$ then $e(g^a,g^b)=g_1^{ab} \bmod 2p+1=g^{abc} \bmod \ 2p+1$. Therefore computing $g^{ab} \bmod 2p+1$ is possible by computing  $e(g^{c^{-2}},e(g^a,g^b))=g^{bc} \bmod 2p+1$, while $g^{c^{-2}}=g^{c^{p-3}} \bmod 2p+1$ \cite{Yamakawa}, \cite{NoteSBM}. The $g^{c^{-2}} \bmod N$ can be computed by at most $O(logp)$ computations since $e(g^{c^i},g^{c^i})=g^{c^{2i+1}}$ and $e(g,g^{c^i})=g^{c^{i+1}}$ \cite{NoteSBM}.
 Yamakawa revealed that in self-bilinear schemes the decisional assumptions like the decisional Diffie-Hellman (DDH) assumption cannot hold and they consider only computational assumptions like CDH problem \cite{Yamakawa}.
A self-multilinear map is defined as follows:
\begin{definition} Self-Multilinear Map: Let $\mathds{R}$ be a ring of the unknown order $xq$ and $\mathds{G}$ be an isomorph cyclic group of the unknown order $xq$. Then the map $e:\underbrace{ \mathds{G}\times\mathds{R}\times\dots \times\mathds{R} }_\text{t \ times}\rightarrow \mathds{G} $ is called a self-multilinear map if it satisfies the following properties:
\begin{enumerate}
\item For any integer $r \in \mathds{S}$, where $\mathds{S}$ is a set of integers, generator $g \in \mathds{R}$ and generator $\acute{g} \in \mathds{G}$ we have
\begin{equation}
 e(\acute{g}, g,\dots, g,\dots, g)^{r} = e(\acute{g}, g,\dots, g^{r},\dots,g)=e(\acute{g}^r, g,\dots, g,\dots,g).
 \end{equation}
\item The map $e$ is non-degenerate; that is, if $g_i \in \mathds{R}$, for $i = 1,\dots,t-1$ are generators of $\mathds{R}$ and $\acute{g}$ is a generator of $\mathds{G}$ then $g_t=e(\acute{g}, g_1,\dots, g_{t-1})$ is also a generator of $\mathds{G}$.
\end{enumerate}
\end{definition}
Let $N=\acute{p}\acute{q}=(xp+1)(2q+1)$ be a modulus, where $q>p$, $\mathcal{I}$ be a principal ideal over $\mathds{Z}$ generated by $xq$ and $z(xq)$ be a random polynomial of $xq$,
then $\mathds{Z}/xq\mathds{Z}=\{0, 1,\dots,xq-1\}$ is considered as a quotient ring. In this article the $(\mathds{R}=\mathds{Z}/\mathcal{I}, +, \times)$ is defined as a ring of order $xq$ with generator $p+z(xq)$. We have $y_1p^2+z(xq) \mod xq=(y_1p^2 \bmod xq)+z(xq)$, so $y_1p^2+z(xq)$ is an element of $\mathds{R}$. 
Let $g$ be a generator of subgroup $\mathds{G_1}$ of $\mathds{Z}^*_N$ of order $pxq$. The $(\mathds{G}, +)$ is defined as a group of order $xq$ with generator $g^p \bmod N$, where the ring $\mathds{R}$ over $+$ and the group $(\mathds{G}, +)$ are isomorph, which makes them behave a similar manner.\\

In the proposed GES, $[y]_1=e=yp+kxq \in \mathds{R}$ is a level-1 public encoding of $y$ and $[y]^{b}_1=d=g^{py} \bmod N \in \mathds{G}$ is the blinded version of this encoding of $y$, where $y \in \{0, 1,\dots, xq-1\}$ is the plaintext and $k$ is a random integer. The exponent of $p$ is encoding level (e.g. $yp^2+z(xq)$ is a level-2 encoding of $y$, where $z(xq)$ is the polynomial of $xq$, thus making $z(0)=0$). The $z(xq)$ is considered as the noise of encoding and does not affect the result of extraction algorithm. Let $[y_{i_1}]^{b}_{1}=d_{i_1}$ be a blinded level-1 encoding of $y_{i_1}$, and $e_{i_2}, e_{i_3}$ be the level-1 encodings of $y_{i_2}$ and $y_{i_3}$, respectively. The blinded level-1 encoding of $y_i$ is referred to as the representation of $y_i$ at level 1. To extract a representation of element $F=y_{i_1}y_{i_2}y_{i_3}$ at level 3, users can simply compute $(d_{i_1})^{e_{i_2}e_{i_3}}=g^{p^3y_{i_1}y_{i_2}y_{i_3}} \bmod N$.
The final result of the extraction algorithm does not depend on the actual polynomial $z(xq)$, which may be added during arithmetic operations on encoded elements; that is $z(xq)$ is just the encoding noise indicating that there exist too many encodings for an integer $y_i$. Because 1 randomizers are not applied, there is no way for generating two different encodings for the same integer. This property prevents many attacks. If one can find three different encodings $u_{i_1}, u_{i_2}$ and $u_{i_3}$ of the same integer $y_i$, he can break the security of the proposed GES scheme by first, applying the Euclidean algorithm on $u_{i_1}-u_{i_2}$ and $u_{i_1}-u_{i_3}$, which is assumed to output a multiple of $xq$, and second, computing the gcd of this output and $u_{i_2}-u_{i_3}$, which would be $xq$. 
Next to this, because in the proposed GES, users can obtain only a unique codeword of an element in a specific level, no zero testing parameter $p_{zt}$ is needed for equality testing.\\

 In this GES, instead of publishing encodings of zero and one as system parameters, the level-1 and blinded level-1 encodings of $m$ random integers are published. The value of $m$ will be determined later. Now assume that users want to obtain an encoding of an element of $\mathds{S}$, where $\mathds{S}=\{2,3,\dots, 2^m\}$ is a set of integers, then, instead of obtaining the encoding of an element $r \in \mathds{S}$ they can compute an encoding of the image of $r \ (\acute{r})$ in the ring $\mathds{R}$ and group $\mathds{G}$. The users have access to description of $\mathds{S}$, with no idea about the real value of the image of $r$ in the $\mathds{R}$ and $\mathds{G}$. 
 Let $W=\{(e_{1},d_{1}), \dots,(e_{m},d_{m})\}\}$, where $e_{i}=[y_i]_1=y_ip+k_ixq \in \mathds{R}$ and $d_{i}=[y_i]^{b}_1=g^{py_i} \bmod N \in \mathds{G}$ for $i \in \{1, \dots, m\}$, are the valid level-1 and blinded level-1 encodings of $y_i$ respectively. The $k_i$'s are the random integers and $\{y_1,\dots, y_m\}$ is a super-increasing set of private integers. For any integer $r_i \in \mathds{S}$, the image of $r_i$ is defined in the ring $\mathds{R}$ and group $\mathds{G}$, symbolized by $\acute{r}_i$, as follow:
\begin{equation}
\acute{r}_i=\displaystyle\sum^{m}_{j=1} y_jr_i^{(j)},
\end{equation}
where $r_{i}^{(j)}$ is the $j$th bit of $r_i$. That is users encode $r$ and they obtain an encoding of the image of $r$ in the ring $\mathds{R}$ and group $\mathds{G}$. The encoding algorithm is explained in the Section 3.1.\\
The $\{y_1,\dots, y_m\}$ must be a super-increasing set, that is, there exists an injective relationship between $r_i$ and $\acute{r}_i$, where $\acute{r}_i=y_1r_{i}^{(1)}+\dots+ y_mr_{i}^{(m)}$. To generate this set in a secure manner, first, the two $80$-bit random integers $y_1$ and $y_2$ must be chosen, where $y_2>y_1$, followed by choosing $y_3>y_2+y_1$, and $y_4>y_3+y_2+y_1$ and so on.\\
If $\{y_1,\dots, y_m\}$ is not a super-increasing set, then there exist no one-to-one relationship between $r_i$ and $\acute{r}_i$. In this case, there is a chance to have the two different integers $r_i$ and $r_j$ with the same images $\acute{r}_i=\acute{r}_j$. However, Because of the large space of $2^m$ and $\mathds{Z}^*_N$, this probability is negligible.\\ 
\begin{definition} 
	\label{GCDH}(GCDH and Ext-GCDH Problems). Given system parameter $SP$ and $t+1$ level-1 encodings $[y_0]_1,[y_1]_1,\dots, [y_t]_1$ as inputs, the Graded-CDH and Extraction Graded CDH problems are defined as follows \cite{GGHLite}.
	\begin{itemize}
		\item \textbf{(Graded CDH Problem)}: 
		Output $[y_1\times \dots \times y_t]_{t-1}$.
			
		\item \textbf{(Ext-Graded CDH Problem)}: 
		Output the extracted string of level-$t$ encoding $[y_1\times \dots \times y_t]_{t}$, which is equal to level-$t$ blinded encoding $[y_1\times \dots \times y_t]^{b}_{t}$.
		\end{itemize}
\end{definition}
  Because of its self-multilinearity property, the Graded-DDH problem is not hold in the proposed GES. Based on Theorem \ref{GCDH}, the Graded-CDH in the proposed GES is as hard as integer factorization. 
\begin{theorem}
	\label{GCDH}
 Computing $[y]_{t-1}=yp^{t-1}+z(xq)$ from $[y]_t=yp^{t}+z(xq)$ (downgrading the level of encodings) is as hard as integer factorization.  	
\end{theorem}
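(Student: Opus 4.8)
The plan is to prove that downgrading and factoring the public modulus $N=(xp+1)(2q+1)$ are polynomial-time equivalent; as only one direction is security-relevant I would dispatch the other quickly. For \emph{factoring $\Rightarrow$ downgrading}: a factoring oracle yields $\acute{p}=xp+1$ and $\acute{q}=2q+1$, hence (factoring $\acute{p}-1=xp$ and halving $\acute{q}-1=2q$) the primes $x$, $p$, $q$ themselves, so $xq$ and $p^{-1}\bmod xq$ are known and downgrading $[y]_t=yp^{t}+z(xq)$ is just outputting $\big(p^{-1}\cdot([y]_t\bmod xq)\big)\bmod xq=yp^{t-1}\bmod xq$, a valid level-$(t-1)$ encoding of $y$.

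For the main direction, \emph{downgrading $\Rightarrow$ factoring}, let $\mathcal A$ be an efficient downgrading algorithm for genuine instances $(SP,N)$. The first step is to use $\mathcal A$ to recover the secret modulus $xq$, exploiting that any two integers representing the same element of $\mathds{R}$ at the same level differ by a multiple of $xq$. Concretely, from the published level-1 encodings in $W$ build a level-$t$ encoding $E=e_{i_1}\cdots e_{i_t}=[y_{i_1}\cdots y_{i_t}]_t$ (with $t\ge 2$), set $a=\mathcal A(E)$ and $b=\mathcal A(a)$ — so $a\equiv(y_{i_1}\cdots y_{i_t})p^{t-1}$ and $b\equiv(y_{i_1}\cdots y_{i_t})p^{t-2}\pmod{xq}$ — and note that $a^2$ and $E\cdot b$ are then two level-$(2t-2)$ encodings of the same element $(y_{i_1}\cdots y_{i_t})^2$, so $xq\mid(a^2-E\cdot b)$, while as integers they differ (indeed they are of very different magnitudes, since $E$ is an un-reduced product whereas $a$ is reduced). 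Running this over several index tuples gives several nonzero multiples of $xq$, whose gcd is $xq$ with overwhelming probability. The second step is standard: $xq$ divides $\phi(N)/2=pxq$ and $\acute{q}-1=2q$ divides $2xq$, so for random $a\in\mathds{Z}^*_N$ the element $a^{2xq}\bmod N$ is $\equiv 1\pmod{\acute{q}}$ always but $\not\equiv 1\pmod{\acute{p}}$ except with probability $1/p$, whence $\gcd(a^{2xq}-1,\,N)=\acute{q}$. Composing the two steps converts $\mathcal A$ into an efficient factoring algorithm for $N$.

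I expect the crux to be the first step of the second direction — making the recovery of $xq$ rigorous. One must argue that $a^2$ and $E\cdot b$ really are distinct (this is where it matters that there are no randomisers, so each element has a single canonical codeword per level and the integer ``noise'' produced by multiplying codewords and by the oracle's own outputs is essentially uncorrelated across independent tuples), and that the gcd of a handful of the resulting multiples of $xq$ collapses to $xq$ and not to a proper multiple; one also has to check that every encoding handed to $\mathcal A$ is a legitimate input at its level (the case $t=1$ needs a small variant, e.g.\ comparing $\mathcal A(e_{i_1})\cdot e_{i_2}$ with $\mathcal A(e_{i_1}e_{i_2})$ at level $1$). The remaining ingredients — factoring $\acute{p}-1$ in the easy direction and splitting $N$ from $xq$ in the hard one — are routine.
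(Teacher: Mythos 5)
Your security-relevant direction (downgrading $\Rightarrow$ factoring) is sound in outline but takes a genuinely different route from the paper. The paper's reduction starts from a bare modulus $N=(2p+1)(2q+1)$ (so $x=2$) and exploits that $e_r=(N-1)/2=p+(2p+1)q$ is a publicly computable level-1 encoding with $q$-noise: the reduction picks a random $r$ it knows, hands $re_r$ to the downgrader, and subtracts $r$ from the returned $r+z(q)$ to expose a multiple of $q$, which ``leads to factoring'' (the final gcd/exponentiation step you make explicit is left implicit there). Because the reduction knows the plaintext, a single query suffices and no distinctness argument is needed; on the other hand, the paper's reduction never supplies the adversary with a full $SP$ (no $W$, no $g^p$), so its simulation is itself informal. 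Your reduction instead works inside an honestly generated instance $(SP,N)$, makes two oracle calls, and extracts a nonzero multiple of $xq$ as $a^2-Eb$. This simulates the adversary's real environment faithfully, but it thereby only reduces from ``factoring $N$ given $SP$'' rather than from a bare factoring instance, and its crux is exactly the step you flag: nothing in the theorem's specification of a downgrading forbids an oracle that correlates its answers so that $a^2=Eb$ identically (e.g.\ returning $a=Em$ with $m\equiv p^{-1}\pmod{xq}$ and then $b=Em^2$ is a formally valid pair of downgradings), so the distinctness claim needs a genuine argument; your cross-tuple variant (comparing $a_1E_2$ with $a_2E_1$) is the right kind of fix, and note that any nonzero multiple of $q$ --- not $xq$ exactly --- already suffices for the final $\gcd(a^{2M}-1,N)$ step, so the ``gcd collapses to $xq$'' worry can be dropped.

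One further caveat: your ``easy'' direction is not established as stated. Factoring $N$ yields $\acute{p}=xp+1$ and $\acute{q}=2q+1$, hence the integers $xp$ and $q$, but splitting $xp$ into the two large primes $x$ and $p$ is itself a factoring problem of comparable difficulty, so a factoring oracle for $N$ does not obviously give you $p$ and $xq$ separately, nor $p^{-1}\bmod xq$. Since the theorem's claim (and the paper's proof) only concerns the direction downgrading $\Rightarrow$ factoring, this does not undermine the substance of your argument, but the sentence as written is wrong and should either be repaired (e.g.\ by assuming the oracle returns the full trapdoor) or deleted.
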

\begin{proof}
Let $N=(2p+1)(2q+1)$ be a RSA modulus, $e_r=(N-1)/2=p+(2p+1)q$

 and $r \in_R Z^*_N$. Suppose that there exists an adversary able to downgrade the level of encoding, then there exists an algorithm $B$, which can factor $N$. The algorithm $B$ sends $re_r=rp+z(q)$ to the adversary. Note that $re_r$ is a valid level-1 encoding. Assume that the adversary is able to compute $r+z(q)$ and send it back to the algorithm $B$, then the algorithm $B$ can compute $z(q)=r+z(q)-r$, which is a polynomial of $q$ and can lead the attacker to factor $N$.
\end{proof}
The Ext-Graded-CDH problem indicates that, it is not possible to compute $[y]_t^b=g^{yp^t} \bmod N$ from $[y]_t=yp^t+z(xq)$. For this computation, the attacker must remove $z(xq)$ from $[y]_t$ or extract the $p$-root of $({g^p})^{yp^t+z(xq)} \bmod N$, which needs the knowledge of factorization of $N$. The representation of an element at different levels of encoding is different. The Ext-Graded-CDH problem is formally proved in Theorem 3.\\
\subsection{Description of the proposed GES}
The proposed GES consists of the following five algorithms:

\begin{enumerate}
\item \textbf{Instance generation} InstGen($1^{\lambda}$). Given security parameter $\lambda$, this algorithm determines system parameters $SP=\{N, \mathds{G}, \mathds{R}, \mathds{S}, g^p, $ $W=\{(e_{1},d_{1}), \dots$ $,(e_{m},d_{m})\}\}$, where $N=(xp+1)(2q+1)$ is a RSA modulus, $\mathds{G}$ is a subgroup of $\mathds{Z}^*_N$ of order $xq$, $g^p$ is a generator of $\mathds{G}$, $\mathds{R}$ is a ring of order $xq$ as defined in previous section and $e_{i}=[y_i]_1=y_ip+k_ixq \in \mathds{R}$, $d_{i}=[y_i]^{b}_1=g^{py_i} \bmod N \in \mathds{G}$ for $i \in \{1, \dots, m\}$ are the valid level-1 and blinded level-1 encodings of $y_i$'s. The $k_i$'s are the random integers and $\{y_1,\dots, y_m\}$ is a super-increasing set of private integers. The Key Generation Center (KGC) keeps $\{y_1,\dots, y_m\}$ private.\\
\item \textbf{Obtaining a level-1 encoding of an element} Level-1-enc($SP$). Any user chooses uniformly random integer $r_i$ and then generates a level-1 encoding ($[r_i]_1=e_i$) and a blinded level-1 encoding ($[r_i]^{b}_{1}=b_i$) of an integer $r_i \in \mathds{S}$ by computing the following equation:
\begin{align}
u_{i}=& r_{i}^{(1)}e_{1}+\dots  +r^{(m)}e_{m}\\  \nonumber
     =& ((r_{i}^{(1)}) y_{1}+\dots+(r_{i}^{(m)} ) y_m)p+((r_{i}^{(1)} ) k_1+\dots+(r_{i}^{(m)} ) k_m)xq\\ \nonumber 
     =& p\acute{r}_i+z(xq),
\end{align}
\begin{align}
b_{i}=& d_{1}^{r_{i}^{(1)}}\times \dots \times d_{m}^{r_{i}^{(m)}}\\\nonumber =&g^{p(y_1(r_{i}^{(1)})+ \dots+ y_m(r_{i}^{(m)}))} \bmod N\\ \nonumber
=&g^{\acute{r}_ip} \bmod N,
\end{align}
where $b_i \in \mathds{R}$ and $u_i \in \mathds{G}$, are the encoding of $\acute{r}$ in $\mathds{R}$ and $\mathds{G}$ respectively.
 The user $i$ publishes $u_i$ as level-1 encoding of the unknown value $\acute{r}_i=\displaystyle\sum^{m}_{j=1}y_jr_i^{(j)}$ and keeps $b_i$ secret for himself. It is obvious that obtaining $r_i$ from $u_i=\displaystyle\sum^{m}_{j=1}r_{i}^{(j)}e_j$ requires solving an instance of the SSP problem which is assumed to be intractable.\\

\item \textbf{Adding encodings} Add($\alpha, \beta$). The addition procedure is given the two blinded encodings ($\alpha, \beta$)=($b_i, b_j$) or the two non-blinded encodings ($\alpha, \beta$)=($u_i, u_j$) with arbitrary level $\kappa$ as input, and will produce another element with the same level through the following equation:
\begin{equation}
u_i+u_j=[r_i]_{\kappa}+[r_j]_{\kappa}=(\acute{r}_i+\acute{r}_j)p^{\kappa}+z(xq)=[r_i+r_j]_{\kappa}
\end{equation}
\begin{equation}
b_ib_j=[r_i]^{b}_{\kappa}[r_j]^{b}_{\kappa}=g^{(\acute{r}_i)p^{\kappa}}g^{(\acute{r}_j)p^{\kappa}}=g^{(\acute{r}_i+\acute{r}_j)p^{\kappa}} \bmod N=[r_i+r_j]^{b}_{\kappa},
\end{equation}
where $z(xq)$ is the encoding noise and does not affect the final result. Users are not allowed to perform any operation on elements of $\mathds{S}$.\\
 
\item \textbf{Multiplying encodings} Mul($[r]_{\kappa_i}, [r_j]_{\kappa_j}$). The multiplication procedure is given two non-blinded encodings $u_i=[r_i]_{\kappa_i}$ and $u_j=[r_j]_{\kappa_j}$ with arbitrary levels $\kappa_i$ and $\kappa_j$ as the input and multiplies them as follows:
\begin{equation}
u_iu_j=[r_i]_{\kappa_i} [r_j]_{\kappa_j}=(\acute{r}_i \acute{r}_j)p^{\kappa_i+\kappa_j}+z(xq)=[r_ir_j]_{\kappa_i+\kappa_j},
\end{equation}
where the result is a level-$(\kappa_i+\kappa_j)$ encoding of $r_{1}r_{2}$. Users are not allowed to perform any operation on elements of $\mathds{S}$.\\ 

\item \textbf{Extraction} Ext($N, [r_{i}]^{b}_{\kappa_{i}}, [r_{j}]_{\kappa_{j}}$). This algorithm is given an encoding $u_j=[r_{j}]_{\kappa_{j}}$ with any arbitrary level $\kappa_{j}$ and a blinded level-$\kappa_{i}$ encoding $b_i=[r_{i}]^{b}_{\kappa_{i}}$ as the input and then extracts the representation (blinded encoding) of $[r_{i}.r_{j}]^{b}_{\kappa_{i}+\kappa_{j}}$ at level $\kappa_{i}+\kappa_{j}$.
\begin{align}
(b_{i})^{u_{j}}&=g^{p^{\kappa_{i}+\kappa_{j}}\acute{r}_{i}\acute{r}_{j}} \bmod N=[r_ir_j]^{b}_{\kappa_{i}+\kappa_{j}}
\end{align}  

\end{enumerate}
Let $a_1,a_2 \in \mathds{S}$ be two integers and $a_3=a_1+ a_2$, then $[a_3]_{1} \ne [a_1]_{1} + [a_2]_1$ (e.g. $[2]_1+[4]_1\ne [6]_1$). Because, the elements of $\mathds{S}$  is mapped to $\mathds{R}$ and $\mathds{G}$ by injective functions $f_1:\mathds{S} \rightarrow \mathds{R}$ and $f_2:\mathds{S} \rightarrow \mathds{G}$ (i.e. functions $f_1$ and $f_2$ are not  morphisms), then $f(u\times v) \ne  f(u)\times f(v)$ and $f(u+v) \ne f(u)+f(v)$. As a result, no operation on $\mathds{S}$ is defined, in the first place.
This property does not cause any obstacle for the applications of the resultant GES such as key exchange and multilinear map, because, in these applications, extracting the value of an encoded element is not intended, and users just wants to securely obtain a unique key by using a public tool.\\

\textbf{Parameter setting.} For $\lambda=80$ bit security the modulus of computation must be at least 1024-bit. Computing $r_i$ from $u_i$ is an instance of the SSP problem. The most efficient algorithms for solving the SSP problem have the complexity $O(mc)$, where $c$ is a size of the largest member of $W$, $O(u\sqrt{m})$ and $O(2^{m/2})$. At $m=2\lambda=160$, it is necessary to assure that $u \ge c > 2^{80}$. Here, $N$ is a RSA modulus, $p$, $x$ and $q$ are chosen to be at least 256-bit primes and all of the members of $W$ are at least 512-bit integers, thus $u \ge c > 2^{511}$.  
\section{Multilinear Map}
\label{MMapConstruction}
The proposed GES can be used to construct an efficient multilinear map. In this system, a trusted third party runs InstGen($1^\lambda$) to generate system parameters $SP$, and then it will announce it publicly. Then any user can generate his encoding, do various operations on encodings, and run a multilinear map in any arbitrary multilinearity-degree.\\
\textbf{Instance Generation($1^\lambda$).}
The instance generation algorithm takes the security parameter $\lambda$ as input and then runs InstGen($1^\lambda$). Finally, it publishes $SP=\{N, \mathds{G}, \mathds{R}, \mathds{S}, g^p, W=\{(e_{1},d_{1}), \dots,(e_{m},d_{m})\}\}$ as public parameters.

\textbf{Element Encoding($SP)$.} Given system parameters $SP$ as input, user $i$ generates a level-1 encoding $(u_i=[r_i]_1)$ and a blinded level-1 encoding $(b_i=[r_i]^{b}_{1})$ of an arbitrary element $r_i$ by running Level-$1$-enc($SP$) algorithm. The user keeps $b_i$ secret and publicly publishes $e_i$.

\textbf{Group Operation($SP,\alpha,\beta$).}
This algorithm takes the system parameters $SP$ and two encodings $(\alpha,\beta)$ as input, where $(\alpha,\beta)=([r_i]_\kappa,[r_j]_\kappa)$ or two encodings $(\alpha,\beta)=([r_i]_{\kappa_i},[r_j]_{\kappa_j})$ or two blinded encodings $(\alpha,\beta)=([r_i]^{b}_\kappa,[r_j]^{b}_\kappa)$. Users can compute $[r_i]^{b}_{\kappa}[r_j]^{b}_{\kappa}=[r_i+r_j]^{b}_{\kappa}=g^{p(\acute{r}_i+\acute{r}_j)} \bmod N$ and  $[r_i]_{\kappa}+[r_j]_{\kappa}=[r_i+r_j]_{\kappa}=(\acute{r}_i+\acute{r}_j)p^\kappa+z(xq)$ as well as multiple of two non-blinded level-$\kappa_i$ and level-$\kappa_j$ encodings by computing $[r_i]_{\kappa_i}[r_j]_{\kappa_j}=[r_ir_j]_{\kappa_i+\kappa_j}=(\acute{r}_i\acute{r}_j)p^{\kappa_i+\kappa_j}+z(xq)$, where the result is at level $\kappa_i+\kappa_j$ and $z(xq)$ is the noise of encoding. 

\textbf{Multilinear Map($SP, [r_{i_1}]^{b}_{\kappa_{i_1}}, [r_{i_2}]_{\kappa_{i_2}}, \dots, [r_{i_{t+1}}]_{\kappa_{i_{t+1}}}$).} 
This algorithm is given the system parameters $SP$, $t$ encodings $\{u_{i_2}=[r_{i_2}]_{\kappa_{i_2}}, \dots, u_{i_{t+1}}=[r_{i_{t+1}}]_{\kappa_{i_{t+1}}}\}$ with level $\{\kappa_{i_2}, \dots, \kappa_{i_{t+1}}\}$ and one blinded level-$\kappa_{i_1}$ encoding $b_{i_1}=[r_{i_1}]^{b}_{\kappa_{i_1}}$ as input and generates a blinded level-$(\kappa_{i_1}+ \dots+ \kappa_{i_{t+1}})$ encoding (representation) of $[r_{i_1}\dots r_{i_{\kappa+1}}]_{\kappa_{i_1}+ \dots+ \kappa_{i_{t+1}}}$ as below
\begin{align}
e(b_{i_{1}}, u_{i_2},\dots ,u_{i_{t+1}})=&(b_{i_{1}})^{\prod\limits^{t+1}_{j=2}u_{i_j}} \bmod N\\ \nonumber
=& g^{p^{(\kappa_{i_1}+ \dots+ \kappa_{i_{t+1}})}\prod\limits^{t+1}_{j=1}u_{i_j} } \ mod \ N\\ \nonumber
=&[r_{i_1}\dots r_{i_{t+1}}]^{b}_{\kappa_{i_1}+ \dots+ \kappa_{i_{t+1}}}
\end{align}
The following lemma proves that the proposed multilinear map is non-degenerate.
\begin{lemma}
  	The proposed self-multilinear map is non-degenerate.
\end{lemma}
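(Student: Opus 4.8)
The plan is to reduce the non-degeneracy requirement of Definition~7(2) to an elementary coprimality statement and then read it off the closed form of $e$ displayed just above the lemma. First I recall the relevant structure: $\mathds{G}$ is cyclic of order $xq$ with generator $g^p$, and $(\mathds{R},+)=(\mathds{Z}/xq\mathds{Z},+)$ is cyclic of order $xq$ with generator $p+z(xq)\equiv p\pmod{xq}$. Since $p$, $x$, $q$ are distinct primes, an element $(g^p)^{a}\in\mathds{G}$ generates $\mathds{G}$ iff $\gcd(a,xq)=1$, and an encoding $u=\acute{r}\,p^{\kappa}+z(xq)$ generates $\mathds{R}$ iff $\gcd(\acute{r}\,p^{\kappa}\bmod xq,\,xq)=1$, equivalently iff $\gcd(\acute{r},xq)=1$. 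Hence the two hypotheses of Definition~7(2) unpack as follows: writing $b_{i_1}=g^{p^{\kappa_{i_1}}\acute{r}_{i_1}}=(g^p)^{p^{\kappa_{i_1}-1}\acute{r}_{i_1}}$, the assumption that $b_{i_1}$ generates $\mathds{G}$ forces $\gcd(\acute{r}_{i_1},xq)=1$; and the assumption that each $u_{i_j}=\acute{r}_{i_j}p^{\kappa_{i_j}}+z(xq)$ generates $\mathds{R}$ forces $\gcd(\acute{r}_{i_j},xq)=1$ for $j=2,\dots,t+1$.

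Next I would substitute the explicit forms of $b_{i_1}$ and the $u_{i_j}$ into the output of the map, pull one factor $p$ out of the exponent (legitimate because $\kappa_{i_1}\ge 1$), and write
\[
e(b_{i_1},u_{i_2},\dots,u_{i_{t+1}})=(g^p)^{K},\qquad
K=p^{\kappa_{i_1}-1}\,\acute{r}_{i_1}\prod_{j=2}^{t+1}\bigl(\acute{r}_{i_j}p^{\kappa_{i_j}}+z(xq)\bigr).
\]
Since $\mathrm{ord}(g^p)=xq$, only $K\bmod xq$ matters; and because $z(xq)$ is a multiple of $xq$, every noise factor collapses, leaving $K\equiv p^{(\kappa_{i_1}+\dots+\kappa_{i_{t+1}})-1}\prod_{j=1}^{t+1}\acute{r}_{i_j}\pmod{xq}$. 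From the first paragraph each $\acute{r}_{i_j}$ is coprime to $xq$ and $p$ is coprime to $xq$, so $\gcd(K,xq)=1$; therefore $e(b_{i_1},u_{i_2},\dots,u_{i_{t+1}})=(g^p)^{K}$ is a generator of $\mathds{G}$, which is precisely the non-degeneracy assertion.

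The main obstacle I anticipate is bookkeeping rather than depth. One must be careful that ``generator of the ring $\mathds{R}$'' is meant in the additive cyclic sense (order $xq$), not multiplicative invertibility, and that the residual power of $p$ left inside $K$ is harmless --- this is exactly where distinctness of $p$, $x$, $q$ enters, so that step should be stated explicitly. One should also verify that the noise $z(xq)$ --- which may have been accumulated through arbitrary Add and Mul operations on the inputs before $e$ is applied --- still vanishes modulo $xq$; this holds because the defining property $z(0)=0$, i.e. being a polynomial multiple of $xq$, is preserved under addition and multiplication, so the inputs may be taken in the normal forms used above with no loss of generality.
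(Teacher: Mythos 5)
Your proof is correct, and it is in fact tighter than the paper's own argument, though both live in the same place: reduce everything to the exponent of $g^p$ modulo $xq$ and argue that it is invertible there. The paper's proof is two sentences: it asserts that $(g^p)^{y}$ is a generator of $\mathds{G}$ for \emph{any} $y$ that is not a multiple of $xq$, and that no two integers $y_1,y_2$ satisfy $y_1y_2\equiv 0 \bmod xq$. Both blanket claims are false as stated, since $xq$ is a semiprime (an exponent divisible by $x$ but not $q$ gives an element of order $q$, and $y_1=x$, $y_2=q$ multiply to $0 \bmod xq$); the paper is implicitly assuming the relevant values avoid the factors $x$ and $q$. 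What you do differently is to extract exactly that assumption from the hypothesis of Definition~7(2): an input generates the cyclic group of order $xq$ (additively in $\mathds{R}$, multiplicatively in $\mathds{G}$) precisely when its value $\acute{r}_{i_j}$ is coprime to $xq$, the noise $z(xq)$ vanishes modulo $xq$, and $p$ is itself coprime to $xq$, so the accumulated exponent $K$ is coprime to $xq$ and the output is again a generator. This buys a proof that is actually valid without unstated restrictions, at the cost of a little bookkeeping (the residual power of $p$ and the normal form of the noise) that the paper glosses over; conceptually the two arguments are the same coprimality-in-the-exponent computation.
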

\begin{proof}
Let the order of $\mathds{G}$ and $g^p$ be a semi-prime integer $xq$, then for any integer $y \in \mathds{Z}-\{xq,2xq,\dots\}$, $(g^p)^{y}$ is also a generator of $\mathds{G}$. Next to this, because, there is not any two integers $y_1$ and $y_2$ such that $y_1y_2=0 \bmod xq$, the proposed multilinear map is non-degenerate.
\end{proof}
The proposed multilinear map needs an off-line trusted party to perform a trusted setup. However, the proposed self-multilinear map does not need KeyGen phase as in MP-NIKE schemes, and it is not limited to a predefined number of users. Because of self-multilinearity property, the output of the self multilinear map can be used as an input for another self-multilinear map.

\section{Efficient Multi-party Non-interactive Key Exchange (MP-NIKE) scheme without key generation algorithm run by TTP}
\label{MP-NIKE}
The proposed GES-based MP-NIKE scheme does not require an on-line key generation server since any user can generate his public/private keys. Suppose $k+1$ users aim to compute a shared key. The proposed MP-NIKE scheme has the following three algorithms.
\begin{enumerate}
	\item \textbf{Setup($1^\lambda$)}. This algorithm takes as input the security parameter $\lambda$ and runs  \textbf{InstGen($1^{\lambda}$)}.
	
	\item \textbf{Publish($SP$)}. In this algorithm, given the system parameters $SP$, every user runs \textbf{Level-1-enc($SP$)}. So, at the end of this algorithm, any user, for example the $i$-th user, will receive $k$ encodings $u_1=[r_{1}]_1,\dots,u_{i-1}=[r_{i-1}]_1,u_{i+1}=[r_{i+1}]_1,\dots, u_{k+1}=[r_{{k+1}}]_1$ from other $k$ users. These encodings are regarded as other users' public keys. Moreover, each user, for example the $i$-th user, generates a blinded level-1 encoding $b_{i}=[r_{i}]^{b}_{1}$ which is regarded as his private key.	
	\item \textbf{ShareKey($N, \{[r_1]_1,\dots, [r_{i-1}]_1, [r_{i+1}]_{1},\dots, [r_{{k+1}}]_1 \}, [r_{i}]^{b}_1$).}
This algorithm is given the system parameters, $k$ level-1 encodings $\{[r_1]_1,\dots, [r_{i-1}]_1, [r_{i+1}]_{1},\dots, [r_{{k+1}}]_1 \}$ and a blinded level-1 encoding $b_{i}=[r_{i}]^{b}_{1}$ and outputs the representation of $r_{1}\times r_{2}\times \dots\times r_{{k+1}}$ at level $k$. In the proposed MP-NIKE scheme, users are not limited to a predefined level $K$ and the representation of encoded element can be extracted in any arbitrary level.
\begin{align}
e(b_i, u_{1},\dots, u_{i-1}, u_{i+1},\dots ,u_{{k+1}})=&(d_{i})^{\prod\limits^{k+1}_{j=1,j\ne i}u_{j}} \bmod N\\ \nonumber
=& g^{p^{k+1}\prod\limits^{k+1}_{j=1}\acute{r}_{i_j} } \ mod \ N
\end{align}
For example, suppose that at the end of \textbf{Publish($SP$)} the user $1$ receives three level-1 encodings $u_{2}=y_{2}p+k_{2}xq \bmod N$, $u_{3}=y_{3}p+k_{3}xq$ and $u_{4}=y_{4}p+k_{4}xq$. It wants to obtain a representation of $y_{1}y_{2}y_{3}y_{4}$ at level 4. So, it computes $d_{1}^{u_{j}}$, where $u_{j}=u_{2}u_{3}u_{4}$. Note that the result does not depend on the coefficient of $xq$.\\
\end{enumerate}

\section{ID-Based MP-NIKE}
\label{ID-MP-NIKE}
In ID-based cryptography, the public key of users is obtained directly from the identity of users. Fortunately, in the MP-NIKE scheme of \cite{Access20} any $logN$-bit random integer $r$ can be a valid public key and the KGC has a trapdoor to compute its respective private key, while others cannot do it. The TTP can use algorithm ID-KeyGen($SP, ID_i$) to compute the private key of user $i$.\\

\textbf{ID-KeyGen($N, g, msk, ID_i$).} This algorithm takes the modulus of computation, generator $g$, master secret key $msk=\{p,x, q\}$ and identity of user $i$ as input and after user authentication, it generates a valid private key for user $i$. Let $N=\acute{p}\acute{q}=(xp+1)(2q+1)$ be a $n$-bit  modulus, $H:\{0, 1\}^* \longrightarrow \{0, 1\}^n$ be a secure cryptographic hash function and $h_i=H(ID_i)$ be the public key of user $i$. Any integer bigger than Frobenius number ($pqx-p-qx$) can be represented by linear combination of $p$ and $xq$ \cite{Sylveste}. So, any $n$-bit random integer $r$ can be valid public key of MP-NIKE scheme of \cite{Access20}.

The KGC computes the private key of user $i$ as below
\begin{align}
h_i p^{-1} \bmod xq=&  y_{h_i}pp^{-1}+k_{h_i}xqp^{-1} \bmod xq\\ \nonumber
=&y_{h_i}p p^{-1} \bmod xq\\ \nonumber
=&y_{h_i} \bmod xq.
\end{align}  

Now the algorithm ID-KeyGen generates the private key for user $i$ as $g^{py_{h_i}} \bmod N$.



\section{Security Analysis}
\label{SecurityProof}
The security of this proposed GES scheme is proved in this Section. First, in Theorem 2, the intractability of decompositioning an encoding $u=[r]_1$ to $e_{1} r^{(1)}+\dots +r^{(j)}e_{j}+\dots +r^{(m)}e_{m} $ and recovering $r$ is proved.
\begin{theorem}
	The adversary that has access to system parameters $SP=\{N, \mathds{G},$ $ \mathds{R}, \mathds{S}, g^p, W=\{(e_{1},d_{1}), \dots,(e_{m},d_{m})\}\}$ and a non-blinded level-1 encoding $u=[r]_1=e_{1} r^{(1)}+\dots +r^{(j)}e_{j}+\dots +r^{(m)}e_{m} $ cannot compute $r$.
\end{theorem}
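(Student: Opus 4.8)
The plan is to prove Theorem 2 by a reduction to the Subset Sum Problem of Definition~\ref{DSSP}. The starting observation is that, written out, $u=[r]_1=\sum_{j=1}^{m} r^{(j)} e_j$ is literally a $0/1$-combination of the publicly known integers $e_1,\dots,e_m$ occurring in $W$, whose selection vector is exactly the binary expansion of the secret $r$. Hence any algorithm that, on input $SP$ and $u$, returns $r$ also returns the subset $\{e_j : r^{(j)}=1\}$ of $W$ summing to $u$, i.e. it solves the SSP instance $(\{e_1,\dots,e_m\},u)$. So the first step is to make this reduction precise: given an adversary $A$ that recovers $r$ from $(SP,u)$ with non-negligible probability, I would build $B$ that, on an SSP instance $(\{a_1,\dots,a_m\},u)$ whose members are random integers of the size fixed in the parameter setting ($m=2\lambda$, each $a_i>2^{511}$), hands $A$ a simulated $SP$ with $e_i=a_i$ together with an independently generated group $\mathds{G}$, generator $g^p$ and blinded encodings $d_i$, runs $A$, and outputs the indicator vector $A$ returns as the sought subset.

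The second step is to argue that this simulation is sound, i.e. that the only part of $(SP,u)$ actually correlated with the challenge $r$ is the SSP instance itself. The components $N,\mathds{G},g^p$ and the $d_i=g^{py_i}\bmod N$ depend only on the fixed secrets $\{y_i\},\{k_i\},p,x,q$ and not on the user's choice of $r$; the value $u$ alone pins down $\acute r=\sum_j y_j r^{(j)}$, but recovering $\acute r$ from the companion blinded encoding $b=\prod_j d_j^{r^{(j)}}=g^{p\acute r}\bmod N$ is a discrete logarithm in $\mathds{G}$, and even $\acute r$ by itself is merely one more subset sum over the \emph{secret} super-increasing set $\{y_1,\dots,y_m\}$, which $A$ never sees. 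I would therefore argue that distinguishing the simulated $d_i$ from the genuine ones, or exploiting the real correlation between $e_i$ and $d_i$, would itself require breaking the discrete logarithm in $\mathds{G}$ or factoring $N$, so that up to a negligible loss $A$ does no better than solving $(\{e_j\},u)$ directly. Then I would invoke the three cited complexities $O(2^{m/2})$, $O(u\sqrt m)$ and $O(mc)$: with $m=2\lambda$ and $u\ge c>2^{511}$ each of them exceeds $2^{\lambda}$, so $A$ cannot run in time polynomial in $\lambda$, contradicting the hypothesis that it wins with non-negligible probability in polynomial time.

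The main obstacle is exactly this "no shortcut" claim, and it has two faces. First, the real $e_i$ are not uniform integers but carry the algebraic shape $e_i=y_i p+k_i x q$; I must argue that the randomising coefficients $k_i$ spread the $e_i$ essentially uniformly over their $512$-bit range, so that the induced SSP instance is no easier than a generic one, and that reducing modulo $N$ or trying to expose $xq$ or $p$ yields nothing without the factorisation of $N$. Second, and more delicate, is ruling out lattice attacks: the density $m/\log_2(\max_i e_i)$ is here bounded below $1$, so I would need either to check that Lagarias--Odlyzko and Coster-type low-density reductions fail on instances of this particular structured form, or to re-tune the parameters so the density is pushed above the threshold at which those attacks provably do not apply. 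Handling these two points rigorously — rather than the bookkeeping of the reduction, which is routine — is where the real work lies.
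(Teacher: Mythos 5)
Your overall strategy---use an adversary that recovers $r$ from $(SP,u)$ to solve the SSP instance $(\{e_1,\dots,e_m\},u)$---is the same as the paper's, but the way you simulate the system parameters has a genuine hole. You propose to set $e_i=a_i$ and hand the adversary an \emph{independently generated} group, generator $g^p$ and blinded encodings $d_i$, and then to argue that any exploitation of the real correlation between $e_i$ and $d_i$ would require solving discrete logarithms or factoring $N$. That last claim is false: the correlation is publicly checkable. With genuine parameters $e_i=y_ip+k_ixq$ and $d_i=g^{py_i}\bmod N$, the $d_i$ lie in the order-$xq$ subgroup, so for every pair $i\ne j$ one has $d_i^{e_j}=(g^p)^{y_iy_jp \bmod xq}=d_j^{e_i}$, whereas independently sampled $d_i$ fail this identity with overwhelming probability. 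An adversary may legitimately test $d_i^{e_j}\stackrel{?}{=}d_j^{e_i}$ and abort on your simulated input, so the reduction loses the adversary's advantage entirely; no appeal to DL or factoring can repair this, because the distinguisher needs neither.

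The paper's proof avoids exactly this problem by a different simulation trick: the reduction $B$ generates the modulus $N=(xp+1)(2q+1)$ itself, hence knows $p$ and $xq$, and defines $y_i=e_ip^{-1}\bmod xq$ and $d_i=g^{py_i}\bmod N$. This makes every given integer $e_i$ a valid encoding of \emph{some} $y_i$ (any integer above the Frobenius number $pxq-p-xq$ is expressible as $y_ip+k_ixq$), and the blinded encodings are then perfectly consistent with the $e_i$ by construction, with no extra hardness assumption needed; the adversary's output $r^{(1)},\dots,r^{(m)}$ is returned directly as the subset. So the missing idea in your proposal is precisely ``know the factorization on the reduction's side and derive the $y_i$ from the challenge $e_i$'' rather than ``fake the $d_i$ and argue indistinguishability.'' Your remaining observations---that real $e_i=y_ip+k_ixq$ with a super-increasing $\{y_i\}$ are not distributed like generic SSP instances, and that low-density subset-sum/lattice attacks must be ruled out---are fair criticisms, but note that the paper's own proof does not address them either; they concern the strength of the theorem as stated rather than a divergence from the paper's argument.
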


\begin{proof}
	Suppose that there is an adversary $A$ which can compute $r$ from $u=[r]_1$, then there is an algorithm $B$ which can solve an instance of the SSP problem. Algorithm $B$ is given a non-super-increasing set of $m$-bit integers $\{e_1,\dots, e_m\}$ as input and must find all the subsets whose sum of elements is equal to $u$. The algorithm $B$ generates a RSA modulus $N=(xp+1)(2q+1)$, choose a generator $g \in \mathds{Z}^*_N$ and then computes 
	\begin{align}
	y_{i}&=e_{i} p^{-1} \bmod xq\\
	d_i&=g^{py_{i}} \bmod N.
	\end{align}
	Note that any integer larger than Frobenius number $(pxq-p-xq)$ can be a valid public encoding in the proposed GES. Finally the algorithm $B$ sends $SP=\{N, \mathds{G}, \mathds{R}, \mathds{S}, g^p, W=\{(e_{1},d_{1}), \dots,(e_{m},d_{m})\}\}$ and $u$ to the adversary $A$. 
	If the adversary $A$ can computes $r$ such that $u=r^{(1)}e_{1}+\dots+r^{(m)}e_{m}$ and sends it back to $B$. Then algorithm $B$ outputs $r^{(1)},\dots,r^{(m)}$ $(r)$ as the solution of the SSP problem. 
\end{proof}
Theorem 3 proves that the security of the proposed scheme is equal to the MP-NIKE scheme of \cite{Access20}.
\begin{theorem}
The security of the proposed GES is equal to the MP-NIKE scheme of \cite{Access20}.
\end{theorem}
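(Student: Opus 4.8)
The plan is to prove the statement by exhibiting a two‑way reduction: an efficient adversary against either scheme yields an efficient adversary against the other with essentially the same advantage, so the two security levels coincide. The starting point is the structural identity already implicit in the construction. Both schemes work over the same modulus $N=(xp+1)(2q+1)$ and the same order‑$xq$ subgroup $\mathds{G}=\langle g^p\rangle$ of $\mathds{Z}^*_N$; in both, a party's public key is exactly a non‑blinded level‑$1$ encoding $u=[r]_1$ and its private key is the blinded encoding $b=[r]^{b}_1=g^{p\acute r}\bmod N$; and in both, the shared key of $k+1$ parties is obtained identically, by raising one party's blinded encoding to the product of the others' non‑blinded encodings, giving $g^{p^{k+1}\prod_j \acute r_j}\bmod N$. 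Hence the two key‑exchange functionalities agree on matching inputs, and the only difference in the public data is that the proposed GES additionally publishes $W=\{(e_j,d_j)\}_{j\le m}$.

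The first step is to show that publishing $W$ grants the adversary no power beyond what the MP‑NIKE security game of \cite{Access20} already allows. Each pair $(e_j,d_j)\in W$ is a (public key, secret key) pair of the scheme of \cite{Access20}: by the Frobenius‑number argument used earlier in the paper, $e_j$ is a legal public key, and $d_j=(g^p)^{y_j}$ with $y_j\equiv e_j p^{-1}\pmod{xq}$ is precisely the private key that ID-KeyGen would output for it. Thus $W$ is just $m$ answers of the honest key‑registration (corruption) oracle of the NIKE game, on public keys chosen independently of the challenge parties. Here one uses that the hidden set $\{y_j\}$ need not be super‑increasing for the adversary's view — that property is only needed for injectivity of $r\mapsto\acute r$ by honest encoders, and, as already noted, collisions are negligible even for a random set — so $W$ can be simulated by querying the registration oracle on fresh random large integers, and Theorem 2 guarantees the public $e_j$ leak nothing about their plaintexts.

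Given this, both reductions are mechanical. From the GES side to \cite{Access20}: on a challenge of the MP‑NIKE scheme of \cite{Access20}, the reduction $B$ assembles $W$ from $m$ registration‑oracle queries, takes the challenge public keys verbatim as the GES challenge encodings $u^*_1,\dots,u^*_{k+1}$ (again legal encodings by the Frobenius bound), runs the GES adversary $A$ on $(SP,W,u^*_1,\dots,u^*_{k+1})$, forwards $A$'s oracle queries to its own oracle, and outputs $A$'s guess for $g^{p^{k+1}\prod_j\acute r^*_j}$; the simulation is perfect up to the negligible statistical slack above, so $B$'s advantage matches $A$'s. Conversely, on a GES challenge $(SP,W,u^*_1,\dots,u^*_{k+1})$, an adversary $A'$ against \cite{Access20} is run on the parameters with $W$ deleted and on the same challenge public keys, its registration queries being relayed to the GES challenger's oracle; the advantage is again preserved. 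Composing the two directions yields the claimed equality.

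The main obstacle is exactly the first step: arguing that handing the adversary the correlated list $W$ is harmless without invoking a new assumption. The argument must therefore rest only on facts already available — that each $(e_j,d_j)$ is an honestly generated public/secret key pair for a Frobenius‑representable public key, that recovering a plaintext from a public encoding is an SSP instance (Theorem 2), and that the secret $\{y_j\}$ may be taken non‑super‑increasing for the adversary's distribution — so that $W$ adds nothing a corruption oracle in the MP‑NIKE game does not already provide. Once that is granted, matching up the remaining components of the two schemes is purely syntactic.
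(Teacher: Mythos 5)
Your proposal is correct in substance, and its forward direction is essentially the paper's proof: the paper's reduction $B$ takes the $m$ public/private key pairs $W=\{(e_j,d_j)\}$ it is given in the MP-NIKE game of \cite{Access20}, hands them to the GES adversary as the system parameters $SP$, presents the challenge public keys verbatim as level-1 encodings, and returns the adversary's Ext-Graded-CDH output as the shared key. Where you differ is in being more complete and more explicit. First, the paper proves only this one direction (GES broken $\Rightarrow$ MP-NIKE broken) even though the theorem claims \emph{equality} of security; your converse reduction (MP-NIKE adversary run on the GES challenge with $W$ suppressed) is precisely the missing half, so your argument actually supports the stated theorem while the paper's supports only one inequality. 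Second, you justify why giving the adversary $W$ is legitimate, by identifying each $(e_j,d_j)$ as an honestly generated key pair obtainable from the corruption/registration interface of the NIKE game and by noting that the super-increasing structure of the $y_j$ is not reflected in the adversary's view; the paper simply postulates that $B$ ``is given'' these pairs and says nothing about distributions. One caveat on your converse direction: the GES game as defined in the paper has no registration oracle to which queries can be ``relayed,'' so you should instead answer the MP-NIKE adversary's registration/corruption queries by running Level-1-enc on the public $W$ yourself (which yields valid pairs $(u_i,b_i)$), accepting a distributional discrepancy with honest key generation in \cite{Access20} that is of the same informal character as the slack you already acknowledge; with that repair, your two-sided argument is sound at the paper's level of rigor and strictly stronger than the published proof.
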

\begin{proof}
	Suppose that there exists an adversary $A$ that can breaks the security of the proposed GES scheme, then there exists an algorithm $B$ that can break the security of the MP-NIKE scheme of \cite{Access20}. Suppose that there is an algorithm $B$ which is given the public parameter $PP= \{ N, H(\cdot), g^p\}$ and $m$ public/private key pairs $W=\{ (e_1, d_1), \dots, (e_m, d_m)\}$ as input and it must compute the shared key of group $W^*=\{u_{1},\dots, u_{s}\}$, where $u_i$ is a valid public key of \cite{Access20} MP-NIKE scheme. The algorithm $B$ gives $SP=\{N, \mathds{G}, \mathds{R}, \mathds{S}, g^p, W=\{(e_{1},d_{1}), \dots,(e_{m},d_{m})\}\}$ as the system parameters and $W^*=\{u_{1},\dots, u_{s}\}$ as a valid level-1 encodings to the attacker $A$. If the adversary $A$ can computes Ext-Graded CDH($u_{1}, u_{2}, \dots, u_{s}$) and return {\tiny }it to the algorithm $B$, then the algorithm $B$ can solve the inputted problem. 
\end{proof}

\section{Conclusion}
In this article, a new efficient GES is introduced, where users are simply able to generate an encoding of the image of an element, without re-randomization.  In this scheme, the users are not able to generate several encodings of the same element, while the noise of encoding is always fixed, thus, the zero testing parameter and re-randomization are not necessary. The extraction algorithm can run on any encoding with any arbitrary level. This GES can be applied in constructing a multilinear map. This GES multilinear map is secure, efficient and practical in different applications, like the multi-party non-interactive key exchange without the key generation algorithm run by TTP. The MP-NIKE scheme of \cite{Access20} is improved here and turned to an ID-based scheme.
\label{Conclution}

\end{document}